\renewcommand{\@biblabel}[1]{[#1]\hfill}
\let\NAT@parse\undefined
\newcommand{\newmarkedtheorem}[1]{%
	\newenvironment{#1}
	{\pushQED{\oprocend}\csname inner@#1\endcsname}
	{\popQED\csname endinner@#1\endcsname}%
	\newtheorem{inner@#1}%
}
\newtheorem{theorem}{Theorem}
\newtheorem{lemma}{Lemma}
\newtheorem{assumption}{Assumption}
\newtheorem{definition}{Definition}
\newtheorem{remark}{Remark}
\def\th@plain{%
	\thm@notefont{}
	\normalfont 
}
\def\th@definition{%
	\thm@notefont{}
	\normalfont 
}
\colorlet{istblue}{blue} 
\colorlet{istred}{red!90!black}
\colorlet{istorange}{orange}
\colorlet{istgreen}{green!50!black}
\newcommand{\X}{\mathbb{X}}
\newcommand{\U}{\mathbb{U}}
\newcommand{\Z}{\mathbb{Z}}
\newcommand{\I}{\mathbb{I}}
\newcommand{\R}{\mathbb{R}}
\newcommand{\setS}{\mathbb{S}}
\newcommand{\pushright}[1]{\ifmeasuring@#1\else\omit\hfill$\displaystyle#1$\fi\ignorespaces}
\begin{document}

\title{Scheduling and control over networks using MPC with time-varying terminal ingredients: extended version$^*$} 

\date{}
\author{Stefan Wildhagen} 
\author{Frank Allg\"ower}
\affil{$^*$This paper is an extended version of \cite{Wildhagen20}. The authors would like to thank the German Research Foundation (DFG) for financial support within grant AL316/13-2 and within the German Excellence Strategy under grant EXC-2075. The authors are with the Institute for Systems Theory and Automatic Control, University of Stuttgart, Germany.
		{\small \{wildhagen}{\small,allgower\}@ist.uni}{\small -stuttgart.de}.}

\maketitle

\begin{abstract}        
\textbf{Abstract.} Rollout approaches are an effective tool to address the problem of co-designing the transmission schedule and the corresponding input values, when the controller is connected to the plant via a resource-constrained communication network. These approaches typically employ an MPC, activated at multiples of the period length of a base transmission schedule. Using multi-step invariant terminal regions and a prediction horizon longer than the base period, stability can be ensured. This strategy, however, suffers from intrinsic shortcomings, such as a high computational complexity and low robustness. We aim to resolve these drawbacks by proposing an MPC with periodically time-varying terminal region and cost for the rollout setup, where the controller is activated at each time instant and features an arbitrary but fixed prediction horizon. We consider in more detail two specific setups from the literature on Networked Control Systems, namely the token bucket network and actuator scheduling. For both setups, we provide conditions for which convergence under application of the time-varying MPC can be guaranteed. In two numerical examples, we demonstrate the benefits of the proposed method.
\end{abstract}


\section{Introduction}

Recently, there has been a great trend away from dedicated communication links between control applications, towards the use of shared and possibly wireless networks. Such communication networks outperform the traditional communication systems in terms of flexibility, low cost and ease of setting up. However, these advantages come with new challenges especially for control applications, since the communication links may in many cases not be considered ideal anymore. Instead, unfavorable effects such as a limited bandwidth, transmission delays or packet dropouts must be considered in the controller design process, giving rise to the research field of Networked Control Systems (NCSs).

An approach to counteract these limitations is to schedule access of the involved control applications to the network. The basic idea behind this approach is that when a transmission of sensor data or a new control input is only triggered if there are sufficient communication resources, delays and packet dropouts are much less likely to occur \cite{Walsh01}. An effective way to do this is via so-called \textit{rollout} approaches \cite{Antunes12_2,Antunes14,Peters16,Gommans17,Rosenthal18,Koegel19,Wildhagen19_2}, which are essentially a special case of model predictive control (MPC) schemes. Therein, a finite-horizon cost functional is minimized over the transmission schedule and the control inputs, before the computed optimal strategy is applied and the horizon is shifted forward in time. In case the resource-constrained network lies between controller and actuator, theoretical guarantees under such schemes were provided in \cite{Antunes14,Gommans17,Wildhagen19_2} for linear unconstrained, linear constrained and nonlinear constrained plants, respectively. Common to these approaches is that they use a \textit{multi-step} MPC formulation: assuming that the network may provide a periodic base schedule with a transmission every $M\ge 1$ time steps, the optimization problem is solved every $M$ steps and the optimal schedule and corresponding control inputs are applied in open loop in the meantime. Since transmitting a control every $M$ steps is feasible, an $M$-step control invariant terminal region and a terminal cost can be found, which are added to the optimization problem to ensure recursive feasibility and stability.

Such multi-step MPC formulations unfortunately suffer from some intrinsic drawbacks. Naturally, a prediction horizon of $N\ge M$ must be used due to the open-loop period of $M$ which is prescribed by the network and is not alterable. While already in general, long prediction horizons result in a high computational complexity, they especially do so for scheduling problems, since the integer transmission decisions induce an exponential growth in complexity with increasing horizon. Possible mitigations of this problem are i) to consider only \textit{a portion of} the possible transmission schedules in the optimization as proposed in \cite{Koegel19}, such that stability can still be guaranteed although optimality is lost; or ii) since the optimization is parallelizable, to run the computation on dedicated hardware, e.g., on field-programmable gate arrays (FPGAs). However, despite these measures, the basic problem remains. Furthermore, the long open-loop phase may result in a low robustness with respect to disturbances.

A traditional MPC scheme, solving an optimization problem with horizon $N\ge 1$ at every time step, would be advantageous especially in rollout control due to the aforementioned limitations of multi-step MPC. However, the conventional $1$-step control invariant terminal regions cannot be found in general in the rollout setup, since the feasible schedule only transmits a new control input every $M$ steps. A possible remedy is to employ an MPC solved at every time step with \textit{periodically time-varying terminal ingredients} as proposed previously in \cite{Boehm09,Lazar15,Lazar18}. Such a control scheme may overcome the mentioned shortcomings of multi-step MPC and at the same time guarantee recursive feasibility and stability for rollout approaches. The methods to find such $M$-periodic terminal regions and costs via convex programs provided in the mentioned references are, however, not directly applicable to rollout control due to the integer scheduling variable involved.

In this paper, we revisit two NCS setups already considered in the literature, in particular, that of control over a token bucket network \cite{Wildhagen19_2} and that of actuator scheduling \cite{Antunes12_2}. For the former, stability using a multi-step MPC has already been established while for the latter, no stability results have been obtained so far. Our main contribution is that for both NCS setups, we provide linear matrix inequality (LMI) conditions to compute $M$-periodic terminal regions and costs with which recursive feasibility and convergence under application of an MPC with time-varying terminal ingredients can be guaranteed. Furthermore, since the token bucket NCS is conveniently analyzed using methods of economic MPC, we give an extension of the results in \cite{Lazar18} to the general economic MPC setup. Stability in addition to convergence could be established for both NCS setups. However, a formal verification is cumbersome especially for the token bucket NCS, such that we focus on the notion of convergence throughout the paper.

The remainder of this paper is organized as follows. The time-varying terminal ingredients MPC scheme is introduced in Section \ref{sec_MPC}, where also conditions for convergence of the general setup are given. In Section \ref{sec_NCS}, the two considered NCS setups are briefly introduced and conditions for convergence under application of the time-varying MPC are established. We illustrate the theoretical results by two numerical examples in Section \ref{sec_num_results} before we conclude the paper in Section \ref{sec_summary}.

Let $\I$ and $\R$ denote the set of all integers and real numbers, respectively. We denote $\I_{[a,b]}\coloneqq\I\cap[a,b]$ and $\I_{\ge a}\coloneqq\I\cap[a,\infty)$, $a,b\in\I$, and $\mathbb{R}_{\ge a}\coloneqq[a,\infty)$, $a\in\R$. A function $\alpha:\mathbb{R}_{\ge 0}\rightarrow\mathbb{R}_{\ge 0}$ is said to be of class $\mathcal{K}_\infty$ if it is continuous, zero at zero, strictly increasing and unbounded. We denote by $I_n$ and $0_n$ the identity and zero matrix of dimension $n$, respectively, and by $0_{n\times m}$ the zero matrix of dimension $n$ by $m$. Whenever the dimension is clear from context, we omit these subscripts. Denote $A>0$ $(A\ge 0)$ a symmetric positive (semi-)definite matrix $A\in\R^{n\times n}$. For a vector $v\in\mathbb{R}^n$, the set distance to a subset $S\subseteq\mathbb{R}^n$ is defined as $||v||_S \coloneqq \min_{w\in S} ||v-w||$. For a function $f:\R^n\rightarrow\R^n$, define the image of the subset $S$ as $f(S)\coloneqq\{f(x):x\in S\}$.

\section{MPC with periodically time-varying terminal ingredients}
\label{sec_MPC}

\subsection{MPC Setup}

Consider a nonlinear discrete-time system
\begin{equation}
	x(k+1) = f(x(k),u(k)), \label{system_gen}
\end{equation}
with the state $x(k)\in\X\subseteq\R^n$ and input $u(k)\in\U\subseteq\R^m$ at time $k\in\I_{\ge 0}$, where $f:\X\times\U\rightarrow\X$ is continuous. The state and input constraint sets are assumed to be closed.

The objective function of the MPC at time $k$ is defined by
\begin{equation*}
	V(x(\cdot | k),u(\cdot | k),k)= \sum_{i=0}^{N-1} \ell(x(i|k),u(i|k)) + V_f(x(N|k),k),
\end{equation*}
where $x(\cdot | k)\coloneqq\{x(0|k),\ldots,x(N|k)\}$ and $u(\cdot | k)\coloneqq\{u(0|k),\ldots,u(N-1|k)\}$ denote the predicted state and input trajectories, respectively. The stage cost $\ell: \X\times\U\rightarrow\R$ and terminal cost $V_f:\X\times\I_{\ge 0}\rightarrow\R$ are both assumed to be continuous. The MPC operates with the following scheme:
\begin{enumerate}
	\item At time $k$, measure $x(k)$, solve the optimization problem denoted by $\mathcal{P}(x(k),k)$
	\begin{align}
		V^*(&x(k),k) =\min_{x(\cdot|k),u(\cdot|k)} V(x(\cdot | k),u(\cdot | k),k) \nonumber \\
		\text{s.t. }	&x(i+1|k)=f(x(i|k),u(i|k)) \nonumber \\
		&x(i|k)\in\X,\; u(i|k)\in\U, \quad \forall i\in\I_{[0,N-1]} \nonumber \\
		&x(0|k)=x(k),\; x(N|k)\in\X_f(k) \nonumber
	\end{align}
	and denote its minimizers by $x^*(\cdot|k)$ and $u^*(\cdot|k)$.
	\item Apply the feedback $u(k)=u^*(0|k)$ to \eqref{system_gen}.
	\item Set $k\leftarrow k+1$ and go to 1).
\end{enumerate}
Note that both the terminal cost $V_f(\cdot,k)$ and the closed terminal region $\X_f(k)\subseteq\X, \;k\in\I_{\ge 0}$ are allowed to be time-varying. We denote by $\mathcal{X}(k)$ the feasible set at time $k$, i.e., the set of all states $x$ such that $\mathcal{P}(x,k)$ admits a solution.

Typically in MPC, the goal is to design the terminal region and cost such that recursive feasibility and stability can be guaranteed. As introduced in \cite{Lazar18}, a periodic terminal ingredients design is employed here. For an $M\in\I_{\ge 1}$, let $\{\setS_j,F_j(\cdot)\}_{j=0,\ldots,M-1}$ denote a set of $M$ terminal ingredients, where $\setS_j$ are terminal regions and $F_j:\setS_j\rightarrow\R$ are terminal costs. Suppose now that $\X_f(0)=\setS_j$ and $V_f(\cdot,0) = F_j(\cdot)$, for an arbitrary $j\in\I_{[0,M-1]}$ are used in $\mathcal{P}(x(0),0)$ at initial time. Then, the time-varying terminal cost and region at all other time instances are defined by
\begin{equation*}
	\X_f(k) \coloneqq \setS_{(j+k)\text{mod}M} \text{ and } V_f(\cdot,k) \coloneqq F_{(j+k)\text{mod}M}(\cdot).
\end{equation*}

\subsection{Recursive Feasibility and Convergence}

In some scenarios arising in NCSs, the stage cost may not be positive definite, such that the usual conditions ensuring convergence and stability in MPC are not fulfilled. In our recent work \cite{Wildhagen19_2}, it was demonstrated that such a scenario may arise in particular if the communication network involves a dynamical component itself. Recently, there has been a great interest in the analysis of \textit{economic} MPC (see, e.g., \cite{Amrit11,Dong18}), where a general (non-positive-definite) stage cost is used, such that the optimal regime of operation of the system might lie in a general subset $\bar{\X}$ of the state space. Conditions have been developed in the former references under which convergence to this optimal regime of operation can be guaranteed. Due to its great usefulness in the analysis of NCSs, an extension to economic MPC of the setup presented in \cite{Lazar18} is given next. Since many of the introduced concepts are well-known in the literature on economic MPC, we will keep the proofs short and refer the interested reader to related literature instead.

Of central importance in economic MPC are the concepts of optimal asymptotic behavior of a system and dissipativity, which we will introduce next.
\begin{definition}
	The lowest possible asymptotic average cost $\ell_{av}^*$ is defined by
	\begin{equation*}
		\ell^*_{av} \coloneqq \inf_{x(0)\in\X} \hspace{-2pt} \inf_{\substack{u(\cdot) \\ x(k+1)=f(x(k),u(k)) \\ (x(k),u(k))\in\X\times\U}} \hspace{-2pt} \liminf_{K\rightarrow\infty}  \frac{\sum_{k=0}^{K-1} \ell(x(k),u(k))}{K}.
	\end{equation*}
\end{definition}
\vspace{2pt}
Hence, $\ell_{av}^*$ denotes the lowest attainable asymptotic cost of the system under its dynamics and constraints.
\begin{assumption}
	System \eqref{system_gen} is strictly dissipative with respect to the control invariant set $\bar{\X}$ and the supply rate $\ell(x,u)-\ell^*_{av}$, i.e., there exists a continuous storage function $\lambda: \X\rightarrow\mathbb{R}_{\ge 0}$ and a $\mathcal{K}_\infty$-function $\rho$ such that for all $(x,u)\in\X\times\U$
	\begin{equation*}
		\ell(x,u)-\ell_{av}^*+\lambda(x)-\lambda(f(x,u)) \ge \rho(||x||_{\bar{\X}}).
	\end{equation*}
	\label{ass_dissi}
\end{assumption}
\vspace{-10pt}
To ensure recursive feasibility, a notion of periodic terminal invariance is required due to the periodically time-varying terminal regions. For the approach of multi-step MPC, in contrast, an $M$-step invariant terminal region is required, see e.g. \cite{Koegel13,Lazar15}.
\begin{assumption}[\cite{Lazar18}]
	There exist an $M\in\I_{\ge 1}$, terminal regions $\setS_j, \; j=0,\ldots,M-1$ and terminal controllers $\kappa_j:\setS_j\rightarrow\U, \; j=0,\ldots,M-1$ such that
	\begin{itemize}
		\item $f(\setS_j,\kappa_j(\setS_j))\subseteq\setS_{j+1}, \; \forall j\in\I_{[0,M-2]}$,
		\item $f(\setS_{M-1},\kappa_{M-1}(\setS_{M-1}))\subseteq\setS_{0}$,
		\item $\bar{\X}\subseteq\setS_j\subseteq\X, \; \forall j\in\I_{[0,M-1]}$.
	\end{itemize}
	\label{ass_term_inv}
\end{assumption}
For convergence, we also require a decrease condition on the terminal costs and a further assumption on the relation of terminal cost and storage function.
\begin{assumption}
	There exist terminal costs $F_j:\setS_j\rightarrow\R, \; j=0,\ldots,M-1$ such that with $M$, $\setS_j$ and  $\kappa_j, \; j=0,\ldots,M-1$ from Assumption \ref{ass_term_inv}
	\begin{itemize}
		\item $F_{j+1}(f(x,\kappa_j(x)))-F_j(x)\le-\ell(x,\kappa_j(x)) + \ell_{av}^*, \; \forall x\in\setS_j, \; \forall j\in\I_{[0,M-2]}$,
		\item $F_{0}(f(x,\kappa_{M-1}(x)))-F_{M-1}(x)\le-\ell(x,\kappa_{M-1}(x)) + \ell_{av}^*, \; \forall x\in\setS_{M-1}$.
	\end{itemize}
	\label{ass_term_decr}
\end{assumption}
Note that for $M=1$, Assumptions \ref{ass_term_inv} and \ref{ass_term_decr} recover the standard assumptions used in MPC, cf. \cite{Amrit11,Dong18}.
\begin{assumption}
	The minimum of $F_j(x)+\lambda(x)$ is $0$ for all $j\in \I_{[0,M-1]}$. The minimums are attained exactly on $\bar{\X}$, i.e., $\bar{\X}=\arg\min_{x}F_j(x)+\lambda(x)$ for all $j\in \I_{[0,M-1]}$.
	\label{ass_lb_term_cost_rot}
\end{assumption}
\begin{remark}
	The minimums at $0$ are without loss of generality.
\end{remark}

With these assumptions, we are ready to establish recursive feasibility and convergence to $\bar{\X}$.
\begin{theorem}
	Suppose $x(0)\in\mathcal{X}(0)$. If Assumptions \ref{ass_dissi}-\ref{ass_lb_term_cost_rot} hold, then the optimization problem $\mathcal{P}(x(k),k)$ is feasible for all $k\in\I_{\ge 0}$ and $x(k)$ converges to $\bar{\X}$ as $k\rightarrow\infty$.
	\label{thm_convergence_MPC}
\end{theorem}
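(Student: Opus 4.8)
The plan is to follow the standard economic-MPC route via a \emph{rotated} cost functional, adapted to the periodically time-varying terminal ingredients. First I would define the rotated stage cost $\tilde{\ell}(x,u) \coloneqq \ell(x,u) - \ell^*_{av} + \lambda(x) - \lambda(f(x,u))$ and the rotated terminal costs $\tilde{F}_j(x) \coloneqq F_j(x) + \lambda(x)$. Assumption~\ref{ass_dissi} then gives $\tilde{\ell}(x,u) \ge \rho(\|x\|_{\bar{\X}}) \ge 0$, and Assumption~\ref{ass_lb_term_cost_rot} gives $\tilde{F}_j \ge 0$ with minimum $0$ attained exactly on $\bar{\X}$, so the rotated quantities behave like genuine positive-definite ingredients relative to $\bar{\X}$. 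The key algebraic observation is that along any trajectory the storage term telescopes: $\sum_{i=0}^{N-1} \tilde{\ell}(x(i|k),u(i|k)) = \sum_{i=0}^{N-1}[\ell(x(i|k),u(i|k))-\ell^*_{av}] + \lambda(x(0|k)) - \lambda(x(N|k))$. Hence the rotated objective differs from the original $V$ only by the constant $-N\ell^*_{av}$ and the boundary term $\lambda(x(0|k)) = \lambda(x(k))$, both independent of the decision variables. Consequently the two optimization problems have identical minimizers, so the closed loop generated by $\mathcal{P}(x(k),k)$ coincides with the one analyzed through the rotated value function.

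Next I would establish recursive feasibility by the usual shifting argument, keeping careful track of the period index $p_k \coloneqq (j+k)\bmod M$. Given the optimizer $x^*(\cdot|k),u^*(\cdot|k)$ at time $k$, I would construct a candidate at $k+1$ by dropping the first element and appending the terminal-controller step $\hat{u}(N-1|k+1) = \kappa_{p_k}(x^*(N|k))$. Since $x^*(N|k)\in\setS_{p_k}$, the periodic invariance of Assumption~\ref{ass_term_inv} yields $\hat{x}(N|k+1) = f(x^*(N|k),\kappa_{p_k}(x^*(N|k))) \in \setS_{(p_k+1)\bmod M} = \setS_{p_{k+1}} = \X_f(k+1)$, so the candidate satisfies the terminal constraint; the shifted portion inherits feasibility from the optimizer and $\kappa_{p_k}$ maps into $\U$. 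This proves $x(k)\in\mathcal{X}(k) \Rightarrow x(k+1)\in\mathcal{X}(k+1)$ for all $k$.

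For convergence I would use the rotated optimal value $\tilde{V}^*$ as a time-varying Lyapunov function. Evaluating the rotated objective on the shifted candidate and cancelling the shared middle terms leaves the terminal terms, which Assumption~\ref{ass_term_decr} bounds precisely as $\tilde{F}_{p_{k+1}}(f(x,\kappa_{p_k}(x))) - \tilde{F}_{p_k}(x) + \tilde{\ell}(x,\kappa_{p_k}(x)) \le 0$; this is the rotated form of the decrease condition, obtained by adding $\lambda(f(x,\kappa_{p_k}(x))) - \lambda(x)$ to the inequality of Assumption~\ref{ass_term_decr}. Optimality at $k+1$ then gives the descent estimate $\tilde{V}^*(x(k+1),k+1) \le \tilde{V}^*(x(k),k) - \tilde{\ell}(x(k),u^*(0|k)) \le \tilde{V}^*(x(k),k) - \rho(\|x(k)\|_{\bar{\X}})$. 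Since $\tilde{V}^* \ge 0$, telescoping over $k$ yields $\sum_{k=0}^{\infty}\rho(\|x(k)\|_{\bar{\X}}) \le \tilde{V}^*(x(0),0) < \infty$, so $\rho(\|x(k)\|_{\bar{\X}})\to 0$, and because $\rho\in\mathcal{K}_\infty$ we conclude $\|x(k)\|_{\bar{\X}}\to 0$, i.e.\ $x(k)\to\bar{\X}$.

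I expect the main obstacle to be bookkeeping rather than a deep difficulty: the periodic terminal ingredients force the shift to advance the terminal index by one at each step, and the rotation must be carried out consistently across the whole index family $\{\tilde{F}_j\}$. The conceptual crux is verifying that the rotation preserves minimizers — so that convergence of the rotated, positive-definite problem transfers back to the original economic MPC — and that the rotated terminal decrease reproduces the standard template; once these are in place, the argument reduces to the classical terminal-ingredients Lyapunov proof applied index-by-index.
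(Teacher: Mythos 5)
Your proposal is correct and takes essentially the same route as the paper: the paper's proof likewise establishes recursive feasibility via the shifted candidate using the periodic terminal controller, and handles convergence by the rotated-cost argument (citing the combination of \cite[Theorem 3]{Lazar18} and \cite[Theorem 15]{Amrit11}), which your write-up simply carries out explicitly. The details you fill in — minimizer invariance under rotation, the rotated terminal decrease obtained by adding the storage difference to Assumption \ref{ass_term_decr}, and the telescoping descent of the rotated value function — are exactly the omitted steps.
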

\begin{proof}
	Note first that for each $k\in\I_{\ge 0}$, there exists a $p\in\I_{[0,M-1]}$ such that $(j+k)\text{mod}M=p$, i.e., $\X_f(k)=\setS_p$ and $V_f(\cdot,k)=F_p(\cdot)$ are used as terminal ingredients in $\mathcal{P}(x(k),k)$.
	
	It is then possible to establish recursive feasibility as in \cite{Lazar15,Lazar18}, by considering the feasible input trajectory at $k+1$
	\begin{equation}
		u(\cdot | k + 1)=\{u^*(1|k),\ldots,u^*(N-1|k),\kappa_p(x^*(N|k))\}. \label{feas_inp_traj}
	\end{equation}
	
	For the proof of convergence, we rely on the notion of rotated cost functions as usually done in economic MPC (see, e.g., \cite{Amrit11}). The rotated optimization problem can be shown to have the same minimizer as $\mathcal{P}$, such that the former can be used for analysis of convergence. Considering the feasible input trajectory \eqref{feas_inp_traj}, the proof of convergence is then essentially a direct combination of the proofs of \cite[Theorem 3]{Lazar18} and \cite[Theorem 15]{Amrit11}, and is omitted here for brevity.
\end{proof}
\begin{remark}
	Under some additional assumptions, also stability of $\bar{\X}$ can be shown. We do not elaborate on stability here and refer the reader to \cite[Section 2.4.5]{Rawlings17}.
\end{remark}

\section{Application to Networked Control Systems}
\label{sec_NCS}

In this section, we revisit two setups that were already considered in the literature on NCSs. For both setups, we investigate in detail how Assumptions \ref{ass_dissi}-\ref{ass_lb_term_cost_rot}, which guarantee convergence of the time-varying MPC scheme, can be fulfilled. Thereby, we also give guidelines to design suitable terminal ingredients.

\subsection{Control over a dynamical token bucket network}

\begin{figure}
	\centering
	\begin{tikzpicture}[>= latex]
	\node [thick,draw,rectangle, inner sep=1.5pt,minimum size=1.1cm,align = center,rounded corners=3pt] (actuator) at (0.2*7,0) {ZOH Actuator};
	\node[anchor=south east,inner sep=1pt] at (actuator.south east) {\textcolor{istblue}{$u_s$}};
	\node [thick,draw,rectangle, inner sep=1.5pt,minimum size=1.1cm,align = center,rounded corners=3pt] (plant) at (0.55*7,0) {Plant};
	\node[anchor=south east,inner sep=1pt] at (plant.south east) {\textcolor{istblue}{$x_p$}};
	\node [thick,draw,rectangle, inner sep=1.5pt,minimum size = 1.1cm,align = center,rounded corners=3pt] (ctrl) at (0.83*7,0) {MPC};
	\node [thick,draw,rectangle, inner sep=1.5pt,minimum size = 1.1cm,align = center,rounded corners=3pt] (netw) at (0.5*7,-1.7) {Token Bucket Network};
	\node[anchor=south east,inner sep=1pt] at (netw.south east) {\textcolor{istblue}{$\beta$}};
	
	\node (upper) at (7,0.2*-2){};
	\node (lower) at (7,0.2*-2-0.25){};
	\node (lower2) at (7+0.25*0.5,0.2*-2 -0.25*0.866){};
	
	\path (0,0) --++ (7,0) node(helpne){} --++ (0,-1.7) node(helpse){} --++ (-7,0) node(helpsw){} --++ (0,1.7) node(helpnw){};
	
	\draw [->,thick] (plant.east) -- (ctrl.west) node[pos = 0.5,above] {$x_p$};
	\draw [thick] (ctrl.east) -- (helpne.center)node[pos=1.2,above] {$u_c$};
	\draw [thick] (helpne.center) -- (upper.center) -- (lower2.center) node[pos=0.5,right] {$\gamma$};
	\draw[thick] (lower.center) -- (helpse.center) -- (netw.east);
	\draw[->,thick] (netw.west) -- (helpsw.center) -- (helpnw.center) -- (actuator.west);
	\draw[->,thick] (actuator.east) -- (plant.west) node[pos = 0.5,above] {$u_s$};
	\end{tikzpicture}
	\caption{Considered configuration of the NCS with token bucket network.}
	\label{fig_network_tb}
\end{figure}
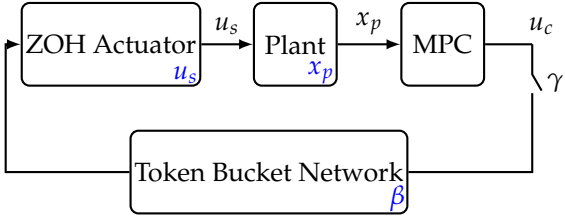
An effective scheduling and control co-design can greatly improve control performance over a resource-constrained network. The token bucket network is a simple and common model for such a network in information theory. A depiction of the NCS configuration considered here can be found in Figure \ref{fig_network_tb}. The state of the controlled plant, $x_p$, is sensed at the controller, which determines both the control input $u_c$ and the transmission decision $\gamma$. The control inputs are transmitted over a network to a zero-order hold (ZOH) actuator, which may only hold the last transmitted input and store it in a state $u_s$. The network itself is described by the so-called token bucket specification, where a transmission over the network is only possible if the associated bucket level state $\beta$ is high enough to support the cost of a transmission. The decision $\gamma$ indicates whether a new control input was sent over the network, where $\gamma=1$ if a transmission was triggered, and $\gamma=0$ if not.

Denoting the overall state and control by
\begin{equation*}
	x\coloneqq[x_p^\top \; u_s^\top \;  \beta ]^\top \text{ and } u\coloneqq[ u_c^\top \; \gamma ]^\top,
\end{equation*}
we consider a dynamical model of the NCS setup which readily incorporates the communication constraints and takes the form
\begin{equation}
	f(x,u) \coloneqq \begin{bmatrix} Ax_p+ B(1-\gamma)u_s + B\gamma u_c \\
		(1-\gamma) u_s + \gamma u_c\\
		\min\{\beta+g-\gamma c,b\}
	\end{bmatrix}.
	\label{dynamics_tb}
\end{equation}
The first line in \eqref{dynamics_tb} corresponds to the dynamics of the linear plant with matrices $A\in\R^{n_p\times n_p}$ and $B\in\R^{n_p\times m_p}$, at which either a new transmitted input is applied immediately or the saved input is held. The second line describes the dynamics of the ZOH actuator, while the third line corresponds to the dynamics of the bucket level. New tokens are added to the current bucket level at a constant rate of $g\in\I_{\ge 1}$ and the cost of a transmission is $c\in\I_{\ge g}$.	If the maximum bucket capacity $b\in\I_{\ge c}$ is reached, arriving tokens are discarded.

The state and input of the overall NCS is constrained by $x(k)\in\X\coloneqq\X_p\times\U_p\times\I_{[0,b]}\subseteq\R^n$ and $u(k)\in\U\coloneqq\U_p\times\{0,1\}\subseteq\R^m$, where $n=n_p+m_p+1$ and $m=m_p+1$. Here, $\X_p$ and $\U_p$ denote the closed plant state and input constraint sets, where $\X_p\times\U_p$ contains the origin. The constraint $\beta(k)\in\I_{[0,b]}$ is to ensure that a transmission is only triggered if the bucket level is high enough to support the cost of a transmission.

The stage cost associated with the NCS is given by
\begin{equation*}
	\ell(x,u) \coloneqq x_p^\top Q x_p + (1-\gamma)u_s^\top R u_s + \gamma u_c^\top R u_c, \;Q,R>0.
\end{equation*}
Note that due to the transmission decision $\gamma$, $\ell$ is not positive definite with respect to the state $u_s$. The bucket level is not considered in the cost at all. Such a cost arises when only the state and input of the plant are part of the performance measure. This is a natural choice since typically, performance of the plant should be maximized under consideration of the communication constraints. We refer to \cite{Wildhagen19_2} for a more detailed description of this NCS setup.
\begin{remark}
	To apply the MPC scheme, the sensor only needs to measure the plant state $x_p(k)$. For known initial conditions of $u_s$ and $\beta$, the controller can simply keep track of these states by simulation.
\end{remark}

A crucial characteristic of the token bucket network is that if at initial time $\beta(0)\in\I_{[c-g,b]}$, it can be guaranteed that a schedule transmitting every $\lceil\frac{c}{g}\rceil$ time instances is feasible. Inspired by this observation, we choose the $M\coloneqq \lceil\frac{c}{g}\rceil$ terminal regions to
\begin{equation}
	\setS_0 = \underbrace{\{0\}\times\{0\}\times\I_{[0,c-g-1]}}_{\eqqcolon\setS_{0}'}\cup\underbrace{\Z_0\times\I_{[c-g,b]}}_{\eqqcolon\setS_{0}''},
	\label{S0_tb}
\end{equation}
and for all $j\in\I_{[1,M-1]}$ to
\begin{equation}
	\setS_j = \{0\}\times\{0\}\times\I_{[0,(j-1)g-1]} \cup \Z_j \times \I_{[(j-1)g,b]},
	\label{Sj_tb}
\end{equation}
with $\I_{[0,-1]}\coloneqq\emptyset$ and sets $\Z_j\subseteq\X_p\times\U_p, \; j=0,\ldots,M-1$. With these terminal regions, consider the terminal controller
\begin{equation}
	\kappa_0(x)=\begin{cases}
		[ 0 \; 0 ]^\top & x\in\setS_{0}' \\
		[ (K[\substack{x_p \\ u_s}])^\top \; 1 ]^\top & x\in\setS_{0}''
	\end{cases}
	\label{term_control_tb}
\end{equation}
with $K\in\R^{m_p\times n_p+m_p}$, and $\kappa_j(x)=[0\; 0]^\top, \; \forall j\in\I_{[1,M-1]}$. At $j=0$, a control is transmitted if $x\in\setS_{0}''$, whereas if $x\in\setS_{0}'$, i.e., if the plant state and saved input are already in the equilibrium, no transmission is triggered. At all other instances in the period, no transmission is triggered as well.

We define
\begin{equation*}
	\tilde{A} \coloneqq \begin{bmatrix} A & 0 \\
		0 & 0 \end{bmatrix}, \;
	\tilde{B} \coloneqq \begin{bmatrix} B \\
		I \end{bmatrix} \text{ and }
	A' \coloneqq \begin{bmatrix} A & B \\
		0 & I \end{bmatrix}.
\end{equation*}
Furthermore, we define $A'' = \tilde{A}+\tilde{B}K$ and note that $A''$ describes the dynamics of the plant state and saved input under the control $\kappa_0$ if $x\in\setS_{0}''$, and that $A'$ does so under the controls $\kappa_j, \; \forall j\in\I_{[1,M-1]}$. With these definitions and an invariance assumption on $\Z_j$, the following lemma establishes periodic invariance of the chosen terminal regions.
\begin{assumption}
	The sets $\Z_j, j\in\I_{[0,M-1]}$ are closed and each contain the origin. Furthermore, they satisfy
	$A'' \Z_0\subseteq\Z_1$, $A' \Z_j\subseteq\Z_{j+1}, \; j\in\I_{[1,M-2]}$ and $A' \Z_{M-1}\subseteq\Z_0$.
	\label{ass_term_inv_tb}
\end{assumption}
\begin{lemma}
	Suppose that Assumption \ref{ass_term_inv_tb} holds. Then,
	\begin{itemize}
		\item Assumption \ref{ass_dissi} is fulfilled with $\ell_{av}^*=0$, $\lambda(x)=||u_s||_S^2$, $R\ge S>0$ and $\bar{\X}\coloneqq \{0\}\times\{0\}\times\I_{[0,b]}$,
		\item Assumption \ref{ass_term_inv} is fulfilled with the terminal regions $\setS_j$ and the terminal controllers $\kappa_j$ as defined in \eqref{S0_tb},\eqref{Sj_tb},\eqref{term_control_tb}.
	\end{itemize}
	\label{lemma_term_invariance_tb}
\end{lemma}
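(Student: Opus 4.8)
The plan is to verify the two bullet points separately. For the dissipativity claim I would first argue that $\ell^*_{av}=0$: since $Q,R>0$ the stage cost is nonnegative, so no feasible trajectory attains a negative average, while the trajectory that starts in $\bar{\X}$ and uses $\gamma=0$, $u_c=0$ keeps $x_p\equiv 0$ and $u_s\equiv 0$ (only $\beta$ evolves), hence $\ell\equiv 0$; thus the infimum equals $0$. With the candidate storage function $\lambda(x)=u_s^\top S u_s$, the dissipation inequality reduces to a one-step computation built around the saved-input update $u_s^+=(1-\gamma)u_s+\gamma u_c$, which I would split on $\gamma$. For $\gamma=0$ one has $u_s^+=u_s$, so $\lambda$ telescopes and the left-hand side equals $x_p^\top Q x_p + u_s^\top R u_s$; for $\gamma=1$ one has $u_s^+=u_c$, and the left-hand side equals $x_p^\top Q x_p + u_s^\top S u_s + u_c^\top(R-S)u_c$. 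Because $R\ge S>0$, in both cases this is bounded below by $x_p^\top Q x_p + u_s^\top S u_s$; since $Q,S>0$ and $\beta$ already lies in $\I_{[0,b]}$, so that $||x||_{\bar{\X}}^2=||x_p||^2+||u_s||^2$, this dominates $\rho(||x||_{\bar{\X}})$ with $\rho(s)=c_0 s^2$, $c_0=\min\{\lambda_{\min}(Q),\lambda_{\min}(S)\}>0$, which is of class $\mathcal{K}_\infty$. This establishes Assumption \ref{ass_dissi}.

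For the periodic-invariance claim I would check the three inclusions of Assumption \ref{ass_term_inv} by splitting each terminal region into its equilibrium slice (where $x_p=u_s=0$) and its $\Z_j$-slice, tracking the plant/actuator pair $(x_p,u_s)$ and the bucket level $\beta$ independently. The $(x_p,u_s)$-part is immediate from Assumption \ref{ass_term_inv_tb}: on a $\Z_j$-slice a hold propagates $(x_p,u_s)$ by $A'$ and a transmission by $A''$, and the inclusions $A''\Z_0\subseteq\Z_1$, $A'\Z_j\subseteq\Z_{j+1}$ and $A'\Z_{M-1}\subseteq\Z_0$ place the successor in the correct $\Z$-set, while on an equilibrium slice both $A'$ and $A''$ fix the origin, which lies in every $\Z_j$. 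The containments $\bar{\X}\subseteq\setS_j\subseteq\X$ then follow since the origin lies in $\X_p\times\U_p$ and in each $\Z_j$, every $\beta$-index set in \eqref{S0_tb} and \eqref{Sj_tb} is a subset of $\I_{[0,b]}$, and $\Z_j\subseteq\X_p\times\U_p$ by assumption.

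The part demanding real care, and the main obstacle, is the bucket-level bookkeeping: I must show $\beta^+$ always lands in the $\beta$-interval prescribed by the successor region. This hinges entirely on the choice $M=\lceil c/g\rceil$, i.e. $(M-1)g<c\le Mg$, together with $g\le c\le b$. On a hold slice the level becomes $\min\{\beta+g,b\}$, shifting $\I_{[0,(j-1)g-1]}$ to $\I_{[g,jg-1]}\subseteq\I_{[0,jg-1]}$ and $\I_{[(j-1)g,b]}$ to $\I_{[jg,b]}$, exactly the two slices of $\setS_{j+1}$; the bound $jg\le(M-1)g<c\le b$ for $j\le M-1$ ensures the saturation at $b$ never interferes on the equilibrium slice. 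At a transmission from $\setS_0''$ the level becomes $\min\{\beta+g-c,b\}$, which is nonnegative precisely because $\beta\ge c-g$, and therefore lies in $\I_{[0,b]}$, the full $\beta$-range of $\setS_1$ (whose equilibrium slice is empty since $(1-1)g-1=-1$). The wrap-around $\setS_{M-1}\to\setS_0$ is the most delicate: a hold sends $\I_{[(M-2)g,b]}$ to $\I_{[(M-1)g,b]}$, and the inequality $(M-1)g\ge c-g$ (equivalently $Mg\ge c$) forces this interval into $\I_{[c-g,b]}$, so the $\Z_0$-successor correctly enters $\setS_0''$, whereas the equilibrium part maps into $\I_{[g,(M-1)g-1]}\subseteq\I_{[0,b]}$ and is absorbed by whichever slice of $\setS_0$ its index meets. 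Collecting these cases yields Assumption \ref{ass_term_inv}, completing the lemma.
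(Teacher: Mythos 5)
Your proof of the second bullet follows essentially the same route as the paper: a case split over $j=0$, $j\in\I_{[1,M-2]}$ and the wrap-around $j=M-1$, tracking the $(x_p,u_s)$-part through $A'$ and $A''$ via Assumption \ref{ass_term_inv_tb} and the bucket level through the shifted (saturated) index intervals, with the wrap-around inclusion resting on $Mg\ge c$, i.e., $M=\lceil c/g\rceil$ — exactly the paper's argument. Your treatment of the first bullet is in fact more self-contained than the paper's, which simply defers to the proof of Theorem 1 in \cite{Wildhagen19_2}: the split on $\gamma$, the resulting lower bound $x_p^\top Q x_p + u_s^\top S u_s$, and the identity $||x||_{\bar{\X}}^2 = ||x_p||^2 + ||u_s||^2$ for $x\in\X$ (valid because $\beta\in\I_{[0,b]}$ is enforced by the state constraint) are all correct and give a valid $\mathcal{K}_\infty$ bound.

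There is, however, one genuine omission. Assumption \ref{ass_term_inv} requires the terminal controllers to map into the input constraint set, $\kappa_j:\setS_j\rightarrow\U$, and you never verify this. For $j\in\I_{[1,M-1]}$ it is immediate ($u_c=0\in\U_p$ and $\gamma=0$), but for $\kappa_0$ on $\setS_0''$ defined in \eqref{term_control_tb} you must show that $Kz\in\U_p$ for every $z\in\Z_0$, which is not automatic: $K$ is an arbitrary gain and $\U_p$ may be a strict subset of $\R^{m_p}$. The paper closes this step by noting that the lower block row of $A''=\tilde{A}+\tilde{B}K$ equals $K$ (the $u_s$-component of $A''z$ is precisely the transmitted input $u_c=Kz$), so the inclusion $A''\Z_0\subseteq\Z_1\subseteq\X_p\times\U_p$ from Assumption \ref{ass_term_inv_tb} directly yields $Kz\in\U_p$. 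This check matters: the recursive-feasibility argument behind Theorem \ref{thm_convergence_MPC} appends $\kappa_p(x^*(N|k))$ to the shifted input sequence, and without input-constraint satisfaction of the terminal controller that candidate sequence need not be feasible. With this one observation added, your proof is complete.
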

\begin{proof}
	For the first statement, we refer the reader to the proof of \cite[Theorem 1]{Wildhagen19_2}.
	
	For the second statement, consider first the case $j=0$. With the terminal controller \eqref{term_control_tb}, we have $f(\setS_{0}',\kappa_0(\setS_{0}'))=\{0\}\times\{0\}\times\I_{[0,c-1]}$ and
	$f(\setS_{0}'',\kappa_0(\setS_{0}''))=
	A'' \Z_0 \times \min\{\I_{[c-g,b]}+g-c,b\} \subseteq \Z_1 \times\I_{[0,b]}$ by $A''\Z_0\subseteq\Z_1$,
	such that in summary, $f(\setS_0,\kappa_0(\setS_{0})) = f(\setS_{0}',\kappa_0(\setS_{0}')) \cup f(\setS_{0}'',\kappa_0(\setS_{0}'')) \subseteq \setS_1$.
	
	For $j\in\I_{[1,M-2]}$, we establish $f(\setS_j,\kappa_j(\setS_j)) = f(\setS_j,[0\; 0]^\top)
	= \{0\}\times\{0\} \times \min\{\I_{[0,(j-1)g-1]}+g,b\} \cup A' \Z_j \times \min\{\I_{[(j-1)g,b]}+g,b\} \subseteq \setS_{j+1}$ by $A'\Z_j\subseteq\Z_{j+1}$.
	
	For $j=M-1$, we have with $A'\Z_{M-1}\subseteq\Z_{0}$ $f(\setS_{M-1},\kappa_{M-1}(\setS_{M-1})) = \{0\}\times\{0\} \times \min\{\I_{[0,(M-2)g-1]}+g,b\} \cup A' \Z_{M-1} \times \min\{\I_{[(M-2)g,b]}+g,b\} \subseteq  \{0\}\times\{0\}\times\I_{[0,(M-1)g-1]} \cup \Z_0 \times \I_{[(M-1)g,b]} \subseteq\setS_0$. The last estimate holds since $M=\lceil\frac{c}{g}\rceil\ge\frac{c}{g}$.
	
	The terminal controllers fulfill the input constraints directly since $A'' \Z_0\subseteq\Z_1\subseteq\X_p\times\U_p$ due to Assumption \ref{ass_term_inv_tb}.
	
	Lastly, $\bar{\X}\subseteq\setS_j\subseteq\X, \; \forall j\in\I_{[0,M-1]}$ holds since all $\Z_j$ contain the origin and $\Z_j\subseteq\X_p\times\U_p, \; \forall j\in\I_{[0,M-1]}$.
\end{proof}

For the terminal costs, consider
\begin{equation}
	F_j(x) = [\substack{x_p \\ u_s}]^\top P_j [\substack{x_p \\ u_s}], \; \forall j\in\I_{[0,M-1]}
	\label{term_cost_tb}
\end{equation}
with $P_j\in\R^{n_p+m_p\times n_p+m_p}$. The following result provides a choice for the matrices $P_j$ such that convergence of the NCS with token bucket network under application of the MPC scheme can be established.
\begin{theorem}
	Suppose that $x(0)\in\mathcal{X}(0)$, that Assumption \ref{ass_term_inv_tb} holds and that there exist symmetric positive definite $X_j\in\R^{n_p+m_p\times n_p+m_p}$, and $Y\in\R^{m_p\times n_p+m_p}$ such that the LMIs
	\begin{align}
		&{\small \begin{bmatrix}
				X_1 & 0 & 0 & \tilde{A}X_0 + \tilde{B}Y \\
				0 & Q^{-1} & 0 & X_0[\substack{I \\ 0}] \\
				0 & 0 & R^{-1} & Y \\
				X_0\tilde{A}^\top + Y^\top\tilde{B}^\top & [\substack{I \, 0}]X_0 & Y^\top & X_0
		\end{bmatrix} } \ge 0, \label{LMI_tb_1}\\
		&{\small \begin{bmatrix}
				X_{(j+1)\text{mod}M} & 0 & A' X_j \\
				0 & \text{diag}(Q^{-1},R^{-1}) & X_j \\
				X_j A'^\top & X_j & X_j \\
		\end{bmatrix}} \ge 0 \label{LMI_tb_2}
	\end{align}
	are satisfied for all $j\in\I_{[1,M-1]}$. Then with $P_j = X_j^{-1}$ and terminal regions and costs as defined in \eqref{S0_tb},\eqref{Sj_tb} and \eqref{term_cost_tb}, the MPC optimization problem $\mathcal{P}(x(k),k)$ is feasible for all $k\in\I_{\ge 0}$ and $x_p(k)$ and $u_s(k)$ converge to $0$ as $k\rightarrow\infty$.
	\label{thm_convergence_tb}
\end{theorem}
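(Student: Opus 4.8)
The plan is to reduce the statement to Theorem \ref{thm_convergence_MPC}, whose hypotheses are Assumptions \ref{ass_dissi}--\ref{ass_lb_term_cost_rot}. Two of these come for free: Lemma \ref{lemma_term_invariance_tb} already supplies Assumption \ref{ass_dissi} (with $\ell_{av}^*=0$, $\lambda(x)=\|u_s\|_S^2$, $R\ge S>0$ and $\bar{\X}=\{0\}\times\{0\}\times\I_{[0,b]}$) and Assumption \ref{ass_term_inv} (periodic invariance of the regions \eqref{S0_tb}, \eqref{Sj_tb} under the controllers \eqref{term_control_tb}), using only Assumption \ref{ass_term_inv_tb}. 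It therefore remains to verify the terminal-cost decrease condition (Assumption \ref{ass_term_decr}) and the lower-bound condition (Assumption \ref{ass_lb_term_cost_rot}) for the quadratic terminal costs \eqref{term_cost_tb} with $P_j=X_j^{-1}$, and finally to turn convergence to $\bar{\X}$ into convergence of $x_p$ and $u_s$.

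For Assumption \ref{ass_term_decr} I would first compute the closed loop induced by the terminal controllers, writing $z:=[x_p^\top\,u_s^\top]^\top$. On $\setS_0''$ the controller \eqref{term_control_tb} sets $\gamma=1$ and $u_c=Kz$, so $z^+=A''z$ and $\ell=z^\top\text{diag}(Q,0)z+z^\top K^\top R K z$; on $\setS_0'$ one has $z=0$ and the decrease is the trivial $0\le0$. For $j\in\I_{[1,M-1]}$ the controller gives $\gamma=0$, hence $z^+=A'z$ and $\ell=z^\top\text{diag}(Q,R)z$. With $\ell_{av}^*=0$, substituting the quadratic costs reduces the decrease inequalities to the matrix inequalities $P_0\ge A''^\top P_1 A''+\text{diag}(Q,0)+K^\top R K$ and $P_j\ge A'^\top P_{(j+1)\bmod M}A'+\text{diag}(Q,R)$ for $j\in\I_{[1,M-1]}$. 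The technical heart of the proof, and the main obstacle, is to show that these are exactly the LMIs \eqref{LMI_tb_1} and \eqref{LMI_tb_2}: one applies a Schur complement to push $X_{(\cdot)}=P_{(\cdot)}^{-1}$, $Q^{-1}$ and $R^{-1}$ onto the diagonal, performs a congruence with $\text{diag}(X_0,I,\dots)$ (respectively $\text{diag}(X_j,I,I)$), and uses the change of variables $Y=KX_0$, so that $A''X_0=\tilde{A}X_0+\tilde{B}Y$ since $A''=\tilde{A}+\tilde{B}K$. The delicate part is keeping this block structure straight and checking that the transformed inequality matches \eqref{LMI_tb_1}, \eqref{LMI_tb_2} block-for-block, including the off-diagonal entries $X_0[\substack{I \\ 0}]$ (from the embedding $\text{diag}(Q,0)=[\substack{I \\ 0}]Q[\substack{I \, 0}]$) and $Y^\top$ (from the $K^\top R K$ term).

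Assumption \ref{ass_lb_term_cost_rot} should be immediate: since $X_j>0$ we have $P_j>0$, so $F_j(x)=z^\top P_j z\ge0$ with equality iff $z=0$, while $\lambda(x)=\|u_s\|_S^2\ge0$ with equality iff $u_s=0$ because $S>0$. Hence $F_j(x)+\lambda(x)\ge0$ with equality exactly when $x_p=0$ and $u_s=0$, i.e. on $\bar{\X}=\{0\}\times\{0\}\times\I_{[0,b]}$, and this holds uniformly in $j$. With all of Assumptions \ref{ass_dissi}--\ref{ass_lb_term_cost_rot} established, Theorem \ref{thm_convergence_MPC} yields recursive feasibility of $\mathcal{P}(x(k),k)$ and $\|x(k)\|_{\bar{\X}}\to0$. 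Finally, since $\beta(k)\in\I_{[0,b]}$ always holds by the state constraint, the closest point in $\bar{\X}$ can match the bucket component, so the set distance reduces to $\sqrt{\|x_p(k)\|^2+\|u_s(k)\|^2}$; its convergence to $0$ gives $x_p(k)\to0$ and $u_s(k)\to0$, which is the claim.
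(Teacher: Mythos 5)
Your proposal is correct and follows essentially the same route as the paper's proof: reduce to Theorem \ref{thm_convergence_MPC} via Lemma \ref{lemma_term_invariance_tb}, establish Assumption \ref{ass_term_decr} by the Schur-complement/congruence equivalence between the LMIs \eqref{LMI_tb_1}, \eqref{LMI_tb_2} and the Riccati-type inequalities with $K=YX_0^{-1}$, and get Assumption \ref{ass_lb_term_cost_rot} from $P_j>0$ and $S>0$. You are merely more explicit than the paper in two places it leaves implicit, namely the case-by-case closed-loop stage-cost computation under the terminal controllers and the final step converting $\|x(k)\|_{\bar{\X}}\to 0$ into $x_p(k)\to 0$ and $u_s(k)\to 0$.
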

\begin{proof}
	Let the control gain be defined by $K=Y X_0^{-1}$. We apply the Schur complement to \eqref{LMI_tb_1} and \eqref{LMI_tb_2} and substitute $X_j=P_j^{-1}$ and $Y=K P_0^{-1}$. Pre- and postmultiplying $P_j$ and inserting $A'' = \tilde{A}+\tilde{B}K$ then gives
	\begin{equation}
		\begin{aligned}
			&A''^\top P_1 A'' - P_0 + [\substack{Q \; 0 \\ 0 \; 0}] + K^\top[\substack{R \; 0 \\ 0 \; 0}] K \le 0, \\
			&A'^\top P_{(j+1)\text{mod}M} A' - P_j + [\substack{Q \; 0 \\ 0 \; R}] \le 0, \; j\in\I_{[1,M-1]}.
		\end{aligned} 	\label{LMI_tb_3}
	\end{equation}
	Since due to Lemma \ref{lemma_term_invariance_tb}, $\ell_{av}^*=0$, these conditions imply that Assumption \ref{ass_term_decr} is fulfilled with this particular choice of terminal costs. Furthermore, since $P_j>0$ for all $j\in\I_{[0,M-1]}$ and $\lambda(x)=||u_s||_S^2$, $R\ge S>0$ from Lemma \ref{lemma_term_invariance_tb}, we conclude that Assumption \ref{ass_lb_term_cost_rot} is fulfilled as well.
	
	With Lemma \ref{lemma_term_invariance_tb}, all conditions of Theorem \ref{thm_convergence_MPC} are fulfilled, thus $\mathcal{P}(x(k),k)$ is feasible for all $k\in\I_{\ge 0}$ and the closed loop state converges to $\bar{\X}=\{0\}\times\{0\}\times\I_{[0,b]}$ as $k\rightarrow\infty$.
\end{proof}
\begin{remark}
	Suppose that the plant state and input constraint sets are polytopic, i.e., $\X_p \coloneqq \{x_p\in\R^{n_p}: c_i x_p \le 1, i=1,\ldots,p_x\}$ and $\U_p \coloneqq \{u_s\in\R^{m_p}: d_i u_s \le 1, i=1,\ldots,p_u\}$ with $c_i\in\R^{1\times n_p}$ and $d_i\in\R^{1\times m_p}$. Then, two possible ways to construct sets $\Z_j$ that fulfill Assumption \ref{ass_term_inv_tb} are the following:
	First, consider polytopic $\Z_j$.
	\begin{enumerate}
		\item Determine an invariant polytope $\Z\subseteq\X_p\times\U_p$ such that $A'^{M-1}A''\Z\subseteq\Z$, e.g., by \cite[Algorithm 2]{Pluymers05}.
		\item Determine $\alpha^* = \max_{\alpha\in[0,1]} \alpha \text{ s.t. } \alpha A'^{j-1}A''\Z\subseteq\X_p\times\U_p$ for all $j\in\I_{[1,M-1]}$.
		\item Define $\Z_0\coloneqq\alpha^*\Z$ and $\Z_j\coloneqq \alpha^* A'^{j-1}A''\Z$ for all $j\in\I_{[1,M-1]}$.
	\end{enumerate}
	Second, consider ellipsoidal $\Z_j\coloneqq\{z\in\R^{n_p+m_p}: z^\top P_j z\le \alpha\}, \; \forall j\in\I_{[0,M-1]}$ with $\alpha\in\R_{\ge 0}$.
	\begin{enumerate}
		\item In view of the system dynamics of $x_p$ and $u_s$ under the terminal controller, note that \eqref{LMI_tb_3} implies that if \\ \vspace{3pt} $\big[\substack{x_p(j) \\ u_s(j)}\big]\in \Z_j$, then also $\big[\substack{x_p(j+1) \\ u_s(j+1)}\big]\in \Z_{(j+1)\text{mod}M}$. \vspace{2pt}
		\item To ensure that $\Z_j\subseteq\X_p\times\U_p$, we add
		\begin{align}
			&{\small\begin{bmatrix}
					1 & [c_{i_x} \; 0] X_j \\
					X_j [c_{i_x} \; 0]^\top & \alpha^{-1}X_j
				\end{bmatrix} \ge 0,} \label{LMI_tb_ellipse} \\
			&{\small\begin{bmatrix}
					1 & [0 \; d_{i_u}] X_j \\
					X_j [0 \; d_{i_u}]^\top & \alpha^{-1}X_j
				\end{bmatrix} \ge 0,
				\begin{bmatrix}
					1 & d_{i_u} Y \\
					Y^\top d_{i_u}^\top  & \alpha^{-1}X_0
				\end{bmatrix} \ge 0} \nonumber
		\end{align}
		for all $j\in\I_{[0,M-1]}$, $i_x\in\I_{[0,p_x]}$ and $i_u\in\I_{[0,p_u]}$ to the collection of LMIs \eqref{LMI_tb_1} and \eqref{LMI_tb_2}. To maximize the volume of the terminal sets, we maximize $\alpha$. Techniques on how to modify \eqref{LMI_tb_1}, \eqref{LMI_tb_2} and \eqref{LMI_tb_ellipse} such that a simultaneous search for $X_j,Y$ and $\alpha$ is possible via a convex program can be found in \cite{Boehm09,Lazar18}.
	\end{enumerate}
	\label{rem_construction_term_regions}
\end{remark}
\begin{remark}
	For nonlinear plants, we expect that LMIs to search for terminal regions and costs could be found in a similar fashion as proposed in \cite{Lazar18}. To arrive at resilient statements, however, this issue warrants further research.
	\label{remark_nonlinear}
\end{remark}

\subsection{Actuator scheduling}

The actuator scheduling setup, which was first introduced in \cite{Antunes12_2}, is depicted in Figure \ref{fig_network_act}. Therein, a plant with multiple actuators is considered. A controller, which is collocated with the plant's sensors, measures the plant state $x_p$ and transmits new control inputs to the actuators over a network that only one user may access at the same time. As a result, the controller can only send new control inputs to one actuator at a given sampling time. Hence, the controller must compute both a control input $u_c$ and a decision $\sigma$, which indicates which actuator should receive a new control input. Assuming there are $M$ actuators, the scheduling decision may take any value in $\sigma\in\I_{[0,M-1]}$. As in the original setup, we assume a set-to-zero strategy, i.e., the input is set to zero if the corresponding actuator is not scheduled.

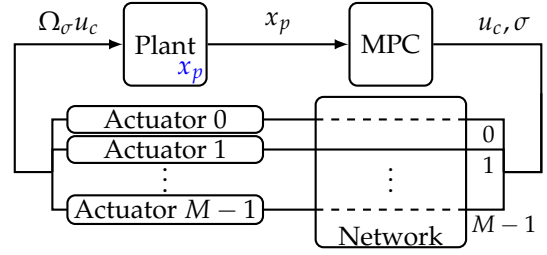
\begin{figure}
	\centering
	\begin{tikzpicture}[>= latex]
	\node [thick,draw,rectangle, inner sep=1.5pt,minimum height=0.3cm,minimum width = 2.6cm,align = center,rounded corners=3pt] (actuator) at (2,-1.0) {Actuator 0};
	\node [thick,draw,rectangle, inner sep=1.5pt,minimum height=0.3cm,minimum width = 2.6cm,align = center,rounded corners=3pt] (actuator) at (2,-1.4) {Actuator 1};
	\node [thick,draw,rectangle, inner sep=1.5pt,minimum height=0.3cm,minimum width = 2.6cm,align = center,rounded corners=3pt] (actuator) at (2,-2.2) {Actuator $M-1$};
	
	\node [thick,draw,rectangle, inner sep=1.5pt,minimum size=1.1cm,align = center,rounded corners=3pt] (plant) at (2,0) {Plant};
	\node[anchor=south east,inner sep=1pt] at (plant.south east) {\textcolor{istblue}{$x_p$}};
	\node [thick,draw,rectangle, inner sep=1.5pt,minimum size = 1.1cm,align = center,rounded corners=3pt] (ctrl) at (5,0) {MPC};
	\node [thick,draw,rectangle, inner sep=1.5pt,minimum height = 2cm, minimum width = 2cm,align = center,rounded corners=3pt] (netw) at (5,-1.7) {};
	\node[anchor=south,inner sep=1pt] at (netw.south) {Network};
	
	\node (upper) at (7,0.2*-2){};
	\node (lower) at (7,0.2*-2-0.25){};
	\node (lower2) at (7+0.25*0.5,0.2*-2 -0.25*0.866){};
	
	\path (0,0) --++ (7,0) node(helpne){} --++ (0,-1.7) node(helpse){} --++ (-7,0) node(helpsw){} --++ (0,1.7) node(helpnw){};
	
	\draw [->,thick] (plant.east) -- (ctrl.west) node[pos = 0.5,above] {$x_p$};
	\draw [thick] (ctrl.east) -- (helpne.center) -- (7,-1.7) -- (6.5,-1.7);
	\draw[thick] (lower.center) -- (helpse.center) -- (6.5,-1.7);
	\draw[->,thick] (0.5,-1.7) -- (0,-1.7) -- (0,0) -- (plant.west) node[pos = 0.5,above] {$\Omega_\sigma u_c$};
	
	\draw[thick] (6.5,-1.7) -- (6.5,-1.0) -- (6,-1.0);
	\draw[thick] (6.5,-1.7) -- (6.5,-1.4) -- (6,-1.4);
	\draw[thick] (6.5,-1.7) -- (6.5,-2.2) -- (6,-2.2);
	
	\draw[thick,dashed] (6,-1.0) -- (4,-1.0);
	\draw[thick] (6,-1.4) -- (4,-1.4);
	\draw[thick,dashed] (6,-2.2) -- (4,-2.2);
	\node at (5,-1.7){$\vdots$};
	\node at (2,-1.7){$\vdots$};
	
	\draw[thick] (4,-1.0) -- (3.3,-1.0);
	\draw[thick] (4,-1.4) -- (3.3,-1.4);
	\draw[thick] (4,-2.2) -- (3.3,-2.2);
	
	\node at (6.3,-1.2){\small $0$};
	\node at (6.3,-1.6){\small$1$};
	\node at (6.5,-2.4){\small$M-1$};
	\node at (6.5,0.25){$u_c,\sigma$};
	
	\draw[thick] (0.7,-1.0) -- (0.5,-1.0) -- (0.5,-1.7);
	\draw[thick] (0.7,-1.4) -- (0.5,-1.4) -- (0.5,-1.7);
	\draw[thick] (0.7,-2.2) -- (0.5,-2.2) -- (0.5,-1.7);
	\end{tikzpicture}
	\caption{Considered configuration of the NCS with actuator scheduling.}
	\label{fig_network_act}
\end{figure}
With the overall state and control given by
\begin{equation*}
	x \coloneqq x_p \text{ and } u \coloneqq [u_c^\top \sigma]^\top,
\end{equation*}
the NCS can be described by a dynamical model of the form
\begin{equation*}
	f(x,u) = Ax_p + B\Omega_\sigma u_c
\end{equation*}
with $A\in\R^{n_p\times n_p}$ and $B\in\R^{n_p\times m_p}$. The control input $u_c$ is partitioned as $u_c=[u_c^{0,\top},\ldots,u_c^{M-1,\top}]^\top$, where $u_c^j\in\R^{m_{p,j}}$ is affiliated with the $j$-th actuator. The matrix $\Omega_\sigma\in\R^{m_p\times m_p}$ is defined by
\begin{equation*}
	\Omega_\sigma \coloneqq \text{diag}(0_{\sum_{i=0}^{\sigma-1}m_{p,i}},I_{m_{p,\sigma}},0_{\sum_{i=\sigma+1}^{M-1}m_{p,i}}),
\end{equation*}
i.e., it selects which components of the input may act on the plant. We do not consider constraints on the plant state and control input as in \cite{Antunes12_2}, such that $\X \coloneqq \R^{n_p}$, $\U \coloneqq \R^{m_p}\times\I_{[0,M-1]}$. The cost associated with the plant is
\begin{equation*}
	\ell(x,u) = x_p^\top Q x_p + u_c^\top\Omega_\sigma R u_c,
\end{equation*}
where $Q>0$ and $R=\text{diag}(R_0,\ldots,R_{M-1})$, $0<R_j\in\R^{m_{p,j}\times m_{p,j}}$ for all $j\in\I_{[0,M-1]}$. The setup is adapted from \cite{Antunes12_2}, where in addition, output measurements and process and measurement noise were considered. No stability results were obtained in this paper, where the focus was on methods to solve the involved optimization problem explicitly. However, to guarantee convergence, we focus on the case of an ideal process model and state measurement here.

Suppose that there exists a base schedule
\begin{equation*}
	\Sigma_B = \{\sigma_0,\ldots,\sigma_{M-1}\}, \; \sigma_j \in\I_{[0,M-1]},\; \forall j\in\I_{[0,M-1]}.
\end{equation*}
With this base schedule, consider the terminal controllers
\begin{equation*}
	\kappa_j(x) = \begin{bmatrix} K_j x_p \\ \sigma_{j} \end{bmatrix} , \; \forall j \in\I_{[0,M-1]}
\end{equation*}
with $K_j\in\R^{m_p\times n_p}$. Consider the terminal regions and costs
\begin{equation}
	\setS_j=\R^{n_p} \text{ and } F_j(x) = x_p^\top P_j x_p, \; j\in\I_{[0,M-1]} \label{term_ing_act}
\end{equation}
with $P_j\in\R^{n_p\times n_p}$. With these terminal ingredients, the following result gives conditions such that the plant converges under the MPC scheme.
\begin{theorem}
	Suppose that there exist symmetric positive definite $X_j\in\R^{n_p\times n_p}$, and $Y_j\in\R^{m_p\times n_p}$ such that the LMIs
	\begin{equation}
		{\small \begin{bmatrix}
				X_{(j+1)\text{mod}M} & 0 & 0 & AX_j + B\Omega_{\sigma_{j}} Y_j \\
				0 & Q^{-1} & 0 & X_j \\
				0 & 0 & R_{\sigma_{j}}^{-1} & \Pi_{\sigma_{j}} Y_j \\
				X_j A^\top + Y_j^\top \Omega_{\sigma_{j}} B^\top & X_j & Y_j^\top\Pi_{\sigma_{j}}^\top & X_j
		\end{bmatrix}} \vspace{-1pt} \ge \vspace{-1pt} 0 \label{LMI_act}\\
	\end{equation}
	are satisfied for all $j\in\I_{[0,M-1]}$, where
	\begin{equation*}
		\Pi_{\sigma} \coloneqq \begin{bmatrix} 0_{m_{p,\sigma}\times\sum_{i=0}^{\sigma-1} m_{p,i}} & I_{m_{p,\sigma}} & 0_{m_{p,\sigma}\times\sum_{i=\sigma+1}^{M-1} m_{p,i}}
		\end{bmatrix}.
	\end{equation*}
	Then with $P_j=X_j^{-1}$ and the terminal regions and costs as defined in \eqref{term_ing_act}, $x_p(k)$ converges to $0$ as $k\rightarrow\infty$.
	\label{thm_convergence_act}
\end{theorem}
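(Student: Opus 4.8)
The plan is to mirror the proof of Theorem~\ref{thm_convergence_tb}: reduce the statement to Theorem~\ref{thm_convergence_MPC} by verifying Assumptions~\ref{ass_dissi}--\ref{ass_lb_term_cost_rot}, which here simplify considerably because the setup is unconstrained and the stage cost is positive definite in the full state. Throughout I would take $\bar{\X}=\{0\}$, $\ell_{av}^*=0$ and the trivial storage function $\lambda\equiv 0$.

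First I would define the terminal gains $K_j=Y_j X_j^{-1}$ and set $P_j=X_j^{-1}$. Applying the Schur complement to \eqref{LMI_act} with respect to its positive definite bottom-right block $X_j$, substituting $X_j=P_j^{-1}$, $Y_j=K_j P_j^{-1}$, and pre- and post-multiplying by $P_j$, I expect to recover the periodic Lyapunov inequality
\begin{equation*}
	(A+B\Omega_{\sigma_j}K_j)^\top P_{(j+1)\text{mod}M}(A+B\Omega_{\sigma_j}K_j)-P_j+Q+K_j^\top\Omega_{\sigma_j}RK_j\le 0,\quad j\in\I_{[0,M-1]}.
\end{equation*}
The only point requiring care is the selector-matrix bookkeeping: I would use the block-diagonal structure $\Omega_{\sigma_j}R=\Pi_{\sigma_j}^\top R_{\sigma_j}\Pi_{\sigma_j}$ so that the $R_{\sigma_j}^{-1}$ and $\Pi_{\sigma_j}Y_j$ entries of \eqref{LMI_act} reproduce exactly the input-cost term $K_j^\top\Omega_{\sigma_j}RK_j=(\Pi_{\sigma_j}K_j)^\top R_{\sigma_j}(\Pi_{\sigma_j}K_j)$. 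Since $f(x,\kappa_j(x))=(A+B\Omega_{\sigma_j}K_j)x_p$ and the stage cost under $\kappa_j$ equals $x_p^\top(Q+K_j^\top\Omega_{\sigma_j}RK_j)x_p$, this inequality is precisely Assumption~\ref{ass_term_decr} for $F_j(x)=x_p^\top P_j x_p$ with $\ell_{av}^*=0$.

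It then remains to check the other three assumptions, all of which I expect to be immediate. Because $\X=\R^{n_p}$, $\U=\R^{m_p}\times\I_{[0,M-1]}$ and $\setS_j=\R^{n_p}$, periodic terminal invariance (Assumption~\ref{ass_term_inv}) holds trivially, with $\kappa_j$ admissible since $\sigma_j\in\I_{[0,M-1]}$ and $\bar{\X}=\{0\}\subseteq\setS_j$. For Assumption~\ref{ass_dissi}, I would observe that the stage cost is already positive definite in $x=x_p$ (there is no $u_s$- or $\beta$-component here), so $\ell(x,u)\ge x_p^\top Qx_p$; with $\bar{\X}=\{0\}$ control invariant (via $u_c=0$), $\ell_{av}^*=0$ attained by the zero trajectory, $\lambda\equiv 0$ and $\rho(s)=\lambda_{\min}(Q)s^2\in\mathcal{K}_\infty$, strict dissipativity follows. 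Finally, since $P_j>0$ and $\lambda\equiv 0$, the function $F_j(x)+\lambda(x)=x_p^\top P_j x_p$ attains its minimum $0$ exactly on $\{0\}=\bar{\X}$, verifying Assumption~\ref{ass_lb_term_cost_rot}.

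With all assumptions in place, Theorem~\ref{thm_convergence_MPC} yields feasibility of $\mathcal{P}(x(k),k)$ for all $k$ (automatic here, since $\mathcal{X}(k)=\R^{n_p}$) and convergence of $x(k)=x_p(k)$ to $\{0\}$. The only genuinely computational step is the Schur-complement reduction; the main obstacle, modest as it is, lies in matching the selector-matrix structure correctly so that the cost blocks of \eqref{LMI_act} reproduce the stage cost incurred under the terminal controller $\kappa_j$.
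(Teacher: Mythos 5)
Your proposal is correct and takes essentially the same route as the paper's proof: Assumptions \ref{ass_dissi}, \ref{ass_term_inv} and \ref{ass_lb_term_cost_rot} are verified trivially with $\bar{\X}=\{0\}$, $\lambda\equiv 0$, $\ell_{av}^*=0$, and Assumption \ref{ass_term_decr} is obtained from \eqref{LMI_act} via Schur complement, the substitutions $P_j=X_j^{-1}$, $K_j=Y_jX_j^{-1}$ and congruence with $P_j$, after which Theorem \ref{thm_convergence_MPC} yields convergence. The only (immaterial) slip is that the Schur complement which directly produces the periodic Lyapunov inequality is the one with respect to the top-left block $\mathrm{diag}(X_{(j+1)\text{mod}M},Q^{-1},R_{\sigma_j}^{-1})$ rather than the bottom-right block $X_j$; your identity $\Omega_{\sigma_j}R=\Pi_{\sigma_j}^\top R_{\sigma_j}\Pi_{\sigma_j}$ is exactly the selector-matrix bookkeeping the paper leaves implicit.
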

\begin{proof}
	First, note that Assumption \ref{ass_dissi} is trivially fulfilled with $\ell_{av}^*=0$, $\lambda(x)=0$ and $\bar{\X}=0$. Further, Assumption \ref{ass_term_inv} is trivially fulfilled as well.
	
	Let the control gains be defined by $K_j=Y_j X_j^{-1}$. We apply the Schur complement to \eqref{LMI_act}, substitute $X_j=P_j^{-1}$ and $Y_j=K_j P_j^{-1}$, and pre- and postmultiply $P_j$ to obtain
	\begin{align*}
		(A + B\Omega_{\sigma_j}K_j)&^\top P_{(j+1)\text{mod}M}(A + B\Omega_{\sigma_j}K_j) -P_j\\
		&+ Q + K_j^\top \Omega_{\sigma_j} R K_j \le 0, \; \forall j\in\I_{[0,M-1]}.
	\end{align*}
	We conclude that Assumption \ref{ass_term_decr} is fulfilled with this choice of terminal costs. Assumption \ref{ass_lb_term_cost_rot} is fulfilled since $P_j>0$.
	
	Again, all assumptions of Theorem \ref{thm_convergence_MPC} are fulfilled such that we conclude convergence of $x_p(k)$ to $0$.
\end{proof}
\begin{remark}
	Note that since there are neither plant state, control input nor terminal constraints, $\mathcal{P}$ is always feasible.
\end{remark}
\begin{remark}
	Owing to the original setup and for ease of presentation, no state and input constraints were assumed for the actuator scheduling setup. Indeed, polytopic $\X_p$ and $\U_p$ could be handled by introducing terminal regions that fulfill similar conditions as described in Remark \ref{rem_construction_term_regions}.
\end{remark}

\section{Numerical results}
\label{sec_num_results}

\subsection{Token bucket network}

For numerical analysis, we revisit the example in \cite{Wildhagen19_2}. Consider the linearized batch reactor taken from \cite{Walsh01}, discretized with a sampling period of \SI{0.1}{\second}, and assume that control inputs are transmitted over a token bucket network. The bucket has the parameters $g=1$, $c=8$ and $b=22$, which results in $M=8$. The cost matrices are chosen to $Q=10I_4$ and $R=I_2$, and the initial conditions are given by $x_p(0)=[1\; 0\; 1\; 0]^\top$, $u_s(0)=[0\; 0]^\top$ and $\beta(0)=22$. Other than in the numerical example in \cite{Wildhagen19_2}, we consider here
\begin{equation*}
\X_p = [-2,2]^4 \text{ and } \U_p = [-3,3]^2
\end{equation*}
as state and input constraints, which necessitate terminal constraints to ensure recursive feasibility and convergence.

To compute suitable terminal costs and controllers, the LMIs provided in Theorem \ref{thm_convergence_tb} were solved using Matlab, Yalmip \cite{YALMIP} and SDPT3 \cite{SDTP3}. We compare the approach of multi-step MPC as taken in \cite{Wildhagen19_2} with the time-varying terminal ingredients MPC presented in this paper. For these approaches, it is necessary to come up with an $M$-step invariant set or $M$ periodically invariant sets, respectively. For this numerical example, we computed these sets using the first approach presented in Remark \ref{rem_construction_term_regions} using Matlab and MPT3 \cite{MPT3}. Note that the set $\setS_0$ obtained from this algorithm is also an adequate $M$-step invariant set for the multi-step MPC. For comparison, we let the time-varying MPC start with $\setS_0$ as a terminal region at time $k=0$ as well. Note that for the multi-step MPC, a prediction horizon of $N\ge M=8$ is needed. This may already result in high computation times to solve the MPC optimization problem due to the (in the worst case) $2^N$ integer decisions involved. The purpose of this numerical example is to show that this limitation can be greatly attenuated by a time-varying terminal ingredients MPC. This scheme allows in principle for any prediction horizon $N\ge 1$, regardless of the inalterable $M$.

\begin{table}
	\caption{Elapsed time while solving MPC optimization problem for multi-step and time-varying terminal ingredients MPC.}
	\begin{tabular}{c | c c c}
		N & {\footnotesize Multi-step MPC} & {\footnotesize Time-varying MPC} & \pbox{2cm}{{\footnotesize Relative Comp. Time of time-varying MPC}} \\ \hline
		12 & \SI{2.954}{\second} & \SI{2.244}{\second} & \SI{437.4}{\percent} \\
		10 & \SI{1.357}{\second} & \SI{1.028}{\second} & \SI{200.4}{\percent} \\
		8 & \SI{0.648}{\second} & \SI{0.513}{\second} & \SI{100}{\percent} \\
		6 & - & \SI{0.158}{\second} & \SI{30.8}{\percent} \\
		4 & - & \SI{0.082}{\second} & \SI{16.0}{\percent} \\
		2 & - & \SI{0.024}{\second} & \SI{4.7}{\percent}
	\end{tabular}
	\label{tbl_tb}
\end{table}
Table \ref{tbl_tb} shows the elapsed time while solving the MPC optimization problem for the multi-step MPC and the time-varying terminal ingredients MPC, respectively, depending on the chosen prediction horizon $N$. For the multi-step MPC, we considered the optimization problem at initial time only, whereas for the time-varying MPC, an average over the first $M$ instances was taken. For comparability, no warm start was used in the optimization in the latter case. Firstly, one can see that the computation times for larger horizons $N=12,10,8$ are approximately the same for both approaches, and quite high in this example. The optimizations were run on an Intel Core i7 and not on dedicated hardware, but nonetheless for rapidly sampled systems, such high computation times may jeopardize practical implementation of the control and scheduling scheme. For the multi-step MPC, using even smaller prediction horizons is not possible due to the constraint $N\ge M=8$. For the time-varying terminal ingredients MPC proposed in this paper, shorter prediction horizons are indeed possible. Note from the last column of Table \ref{tbl_tb} that the elapsed time drastically reduces for shorter horizons, up to a factor of $20$ for $N=2$ as compared to $N=8$. However, shorter prediction horizons also shrink the feasible set. In this example, no solution to the optimization problem could be found for the considered initial conditions and $N=1$. In case that the plant is unconstrained, this limitation vanishes since also no terminal region is required then, such that the optimization problem is always feasible.

\subsection{Actuator scheduling}

For the actuator scheduling setup, we consider a plant that is comprised of two dynamically decoupled batch reactors from \cite{Walsh01}, which is again discretized with a period of \SI{0.1}{\second}. The initial conditions are identical for both reactors $x(0) = [1\; 0\; 1\; 0\; 1\; 0\; 1\; 0]^\top$ and we do not consider state and input constraints. There are $M=4$ control inputs to the system with $m_{p,j}=1$ each, such that $m_p=4$. We assume the cost matrices are given by
\begin{equation*}
	Q = \text{diag}(I_4,10 I_4) \text{ and } R = \text{diag}(10,0.1,1,1),
\end{equation*}
i.e., the state of the second reactor is weighted higher than that of the first. Further, the first input of the first reactor is penalized much more than its second input, whereas the inputs are weighted equally for the second reactor.

\begin{figure}
	\centering
%
%
%
\begin{tikzpicture}

\begin{axis}[%
width=\columnwidth,
height=1.436in,
xmin=0,
xmax=30,
xlabel style={font=\color{white!15!black}},
ymin=-2,
ymax=2,
ylabel style={font=\color{white!15!black}},
axis background/.style={fill=white}
]
\addplot[const plot, color=istgreen, forget plot] table[row sep=crcr] {%
0	1\\
1	1.68976501003236\\
2	1.38433496255331\\
3	0.676140492655575\\
4	0.375690384383584\\
5	0.246063895583125\\
6	0.191148459318153\\
7	0.169685489529069\\
8	0.122189318513458\\
9	0.0645023694653307\\
10	0.0319296266262983\\
11	0.0109963058917136\\
12	-0.0043158570777787\\
13	-0.0175850015021914\\
14	-0.0169582671399418\\
15	-0.0108290738251241\\
16	-0.0114755716258896\\
17	-0.00820774601045215\\
18	-0.00355813274924421\\
19	-0.00304277250540763\\
20	-0.00424995398108918\\
21	-0.006239680902532\\
22	-0.00517991819175823\\
23	-0.00294119006118221\\
24	-0.00254463442781505\\
25	-0.001722065526929\\
26	-0.000737220348951739\\
27	-0.000476101686112118\\
28	-0.000514064018108051\\
29	-0.000384414592463633\\
};
\addplot[const plot, color=istorange, forget plot] table[row sep=crcr] {%
0	0\\
1	-0.0624391831297391\\
2	-0.117064386651565\\
3	-0.130992410495177\\
4	-0.12593788023278\\
5	0.00147712114206311\\
6	-0.0278929143816923\\
7	0.051057555049447\\
8	0.0113673103745472\\
9	0.0434092545804027\\
10	0.0568843336496073\\
11	0.0626249752682255\\
12	0.0341173997676043\\
13	0.0333442622258399\\
14	0.0186855862803134\\
15	0.00970982607102607\\
16	0.00430372598553834\\
17	0.00112115447374284\\
18	0.00490821773083061\\
19	0.00656468959053647\\
20	0.00696436833975123\\
21	0.00391285905613357\\
22	0.00213537813578446\\
23	0.00250689802149695\\
24	0.0023347608562343\\
25	0.00131805672843555\\
26	0.00121036367263216\\
27	0.00106856929800792\\
28	0.000932779287927787\\
29	0.000540051955458476\\
};
\addplot[const plot, color=istblue, forget plot] table[row sep=crcr] {%
0	1\\
1	0.636786853078522\\
2	-1.90945796167701\\
3	-1.02521365706411\\
4	-0.665470163631382\\
5	-0.468041922966507\\
6	-0.376513109051232\\
7	-0.291234601358185\\
8	-0.361565893101762\\
9	-0.2643951890381\\
10	-0.195451466374841\\
11	-0.141011649704119\\
12	-0.109160584892057\\
13	-0.0837417162489131\\
14	-0.0321373428275302\\
15	-0.0372833932088159\\
16	-0.0377580088309894\\
17	-0.0173544436913513\\
18	-0.0214787147519713\\
19	-0.0199649121610224\\
20	-0.0165521814986613\\
21	-0.0141192263903315\\
22	-0.00290323567150175\\
23	-0.00501174709742913\\
24	-0.00502996282845602\\
25	-0.00137206477097274\\
26	-0.00209741994467394\\
27	-0.00201481202437725\\
28	-0.00166880481146214\\
29	-0.000599548243750192\\
};
\addplot[const plot, color=istred, forget plot] table[row sep=crcr] {%
0	0\\
1	0.0886821265819751\\
2	-0.0548143932941074\\
3	-0.257162885305943\\
4	-0.354321970367756\\
5	-0.348784950931788\\
6	-0.33772031291971\\
7	-0.286222406971445\\
8	-0.260267794434767\\
9	-0.224940036756437\\
10	-0.180856120547742\\
11	-0.13581507769546\\
12	-0.106968741132355\\
13	-0.0819211842970937\\
14	-0.0633869930253297\\
15	-0.0504379352177583\\
16	-0.0429110648256438\\
17	-0.0370250914786168\\
18	-0.0299638691136135\\
19	-0.0235641015236636\\
20	-0.0179146676089923\\
21	-0.0143578919004737\\
22	-0.0114787830264871\\
23	-0.00860715205356177\\
24	-0.00646200655965168\\
25	-0.00492567526853449\\
26	-0.00361909108633918\\
27	-0.00265993810527165\\
28	-0.00192491707106712\\
29	-0.00141633728656099\\
};
\end{axis}

\begin{axis}[%
width=\columnwidth,
height=1.436in,
at={(0cm,-2.7cm)},
xmin=0,
xmax=30,
xlabel style={font=\color{white!15!black}},
xlabel={$k$},
ymin=-2,
ymax=2,
ylabel style={font=\color{white!15!black}},
axis background/.style={fill=white}
]
\addplot[const plot, color=istgreen, forget plot] table[row sep=crcr] {%
0	1\\
1	1.05645023133529\\
2	0.711319145289463\\
3	0.452409261685577\\
4	0.233673981450337\\
5	0.146536757682211\\
6	0.0867421277702697\\
7	0.0401292689151472\\
8	0.0212718500456778\\
9	0.0147027248934283\\
10	0.0138774764647203\\
11	0.0160643829510967\\
12	0.0127211290331052\\
13	0.00739316918409134\\
14	0.00626934782194533\\
15	0.0038193122982233\\
16	0.000858939862518171\\
17	-0.000380440813841525\\
18	-0.0010063061669691\\
19	-0.00143708896515959\\
20	-0.00184687044047161\\
21	-0.00142218515693204\\
22	-0.000668866979545004\\
23	-0.000355782972610707\\
24	-0.000230380161935134\\
25	-0.000186323789233122\\
26	-0.000179004493060767\\
27	-0.000189958956273374\\
28	-0.000211886076553325\\
29	-0.000242343180357836\\
};
\addplot[const plot, color=istorange, forget plot] table[row sep=crcr] {%
0	0\\
1	-0.0534412139015044\\
2	-0.0796798285041458\\
3	-0.0894263506251847\\
4	0.232782672857166\\
5	0.137410873443264\\
6	0.0811020123508599\\
7	0.0485596331426108\\
8	0.0296555311110565\\
9	0.018469257521872\\
10	0.0117034752695916\\
11	0.00747983606238354\\
12	0.00481224412820902\\
13	0.00332477017780645\\
14	0.00246728791835174\\
15	0.00194544361200899\\
16	-0.00594745382398032\\
17	-0.00363254080000959\\
18	-0.00219758000173485\\
19	-0.00130531460285837\\
20	-0.000743028872933415\\
21	-0.000390649535248588\\
22	-0.000198150336332562\\
23	-9.54064921596278e-05\\
24	-3.86084160183337e-05\\
25	-5.98976144521513e-06\\
26	1.36394653703376e-05\\
27	2.62375439937724e-05\\
28	3.50949112404578e-05\\
29	4.21043650139466e-05\\
};
\addplot[const plot, color=istblue, forget plot] table[row sep=crcr] {%
0	1\\
1	-1.06450619511143\\
2	-0.544889438302398\\
3	-0.766195255978025\\
4	-0.363837015408706\\
5	-0.162228273839408\\
6	-0.151563650310973\\
7	-0.0657380501096553\\
8	-0.0246852857401762\\
9	-0.00439225943661161\\
10	0.00600717221697318\\
11	0.0115466835270904\\
12	-0.00541762158707693\\
13	0.00412338162179456\\
14	0.00830745249904527\\
15	0.00118309835298791\\
16	0.00206908797190053\\
17	0.00154800801456914\\
18	0.00089657519966584\\
19	0.000359861282267438\\
20	-2.66344574785958e-05\\
21	0.00210379129347649\\
22	0.00100554112038933\\
23	0.000523972166152171\\
24	0.000309676451615011\\
25	0.00021257355360274\\
26	0.000167643149345301\\
27	0.00014638594873063\\
28	0.000136115251404676\\
29	0.00013106834166828\\
};
\addplot[const plot, color=istred, forget plot] table[row sep=crcr] {%
0	0\\
1	-0.0279120305555201\\
2	-0.137329304618717\\
3	-0.222317841264959\\
4	-0.141375929101346\\
5	-0.0746780078753852\\
6	-0.0380589089524459\\
7	-0.0187249879050653\\
8	-0.00552000365039691\\
9	0.00303627233568767\\
10	0.00837001091459055\\
11	0.0115614754902135\\
12	0.0119762007892326\\
13	0.0112948438438771\\
14	0.0110726059217441\\
15	0.010353026204529\\
16	0.00605593915879288\\
17	0.00333439460138775\\
18	0.00175295464975895\\
19	0.00083292801712631\\
20	0.00030290744845503\\
21	0.000170910842312688\\
22	0.000202936661242743\\
23	0.000196148813601954\\
24	0.00018163578279577\\
25	0.000169137894378578\\
26	0.000160759821407603\\
27	0.000156207630472959\\
28	0.000154673085440998\\
29	0.000155435475446077\\
};
\end{axis}
\end{tikzpicture}%
	\caption{State of the first (top) and second (bottom) batch reactor.}
	\label{fig_act_state}
\end{figure}
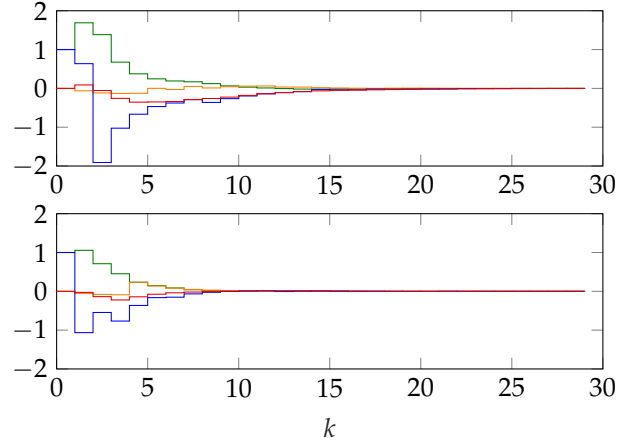
\begin{figure}
	\centering
%
%
\definecolor{mycolor1}{rgb}{0.00000,0.44700,0.74100}%
\begin{tikzpicture}

\begin{axis}[%
width=\columnwidth,
height=3cm,
xmin=0,
xmax=30,
xlabel style={font=\color{white!15!black}},
ymin=0,
ymax=0.3,
ylabel style={font=\color{white!15!black}},
axis background/.style={fill=white}
]
\addplot[const plot, color=istgreen, forget plot] table[row sep=crcr] {%
0	0\\
1	0\\
2	0\\
3	0\\
4	0.253740607581983\\
5	0\\
6	0.205277512716743\\
7	0\\
8	0.115879965770164\\
9	0.0906324268959922\\
10	0.0760857932629774\\
11	0\\
12	0.0340187239450731\\
13	0\\
14	0\\
15	0\\
16	0\\
17	0.0124621988315757\\
18	0.0101304168873865\\
19	0.00789183316027064\\
20	0\\
21	0\\
22	0.00320781998779307\\
23	0.00213293325681041\\
24	0\\
25	0.00114760218020349\\
26	0.000916565352934939\\
27	0.000731668428797572\\
28	0\\
29	0\\
};
\end{axis}

\begin{axis}[%
width=\columnwidth,
height=3cm,
at={(0cm,-2cm)},
xmin=0,
xmax=30,
xlabel style={font=\color{white!15!black}},
ymin=0,
ymax=15,
ylabel style={font=\color{white!15!black}},
axis background/.style={fill=white}
]
\addplot[const plot, color=istgreen, forget plot] table[row sep=crcr] {%
0	0\\
1	10.2359577043621\\
2	0\\
3	0\\
4	0\\
5	0\\
6	0\\
7	0.505350945959599\\
8	0\\
9	0\\
10	0\\
11	0\\
12	0\\
13	-0.154289999855366\\
14	0\\
15	0\\
16	-0.0835338414528943\\
17	0\\
18	0\\
19	0\\
20	0\\
21	-0.0410865301296355\\
22	0\\
23	0\\
24	-0.0141546323704795\\
25	0\\
26	0\\
27	0\\
28	-0.00339452085295799\\
29	0\\
};
\end{axis}

\begin{axis}[%
width=\columnwidth,
height=3cm,
at={(0cm,-4cm)},
xmin=0,
xmax=30,
xlabel style={font=\color{white!15!black}},
ymin=0,
ymax=1,
ylabel style={font=\color{white!15!black}},
axis background/.style={fill=white}
]
\addplot[const plot, color=istorange, forget plot] table[row sep=crcr] {%
0	0\\
1	0\\
2	0\\
3	0.685891963108391\\
4	0\\
5	0\\
6	0\\
7	0\\
8	0\\
9	0\\
10	0\\
11	0\\
12	0\\
13	0\\
14	0\\
15	-0.0163973798319061\\
16	0\\
17	0\\
18	0\\
19	0\\
20	0\\
21	0\\
22	0\\
23	0\\
24	0\\
25	0\\
26	0\\
27	0\\
28	0\\
29	0\\
};
\end{axis}

\begin{axis}[%
width=\columnwidth,
height=3cm,
at={(0cm,-6cm)},
xmin=0,
xmax=30,
xlabel style={font=\color{white!15!black}},
xlabel={$k$},
ymin=-0.0101696960947368,
ymax=10,
ylabel style={font=\color{white!15!black}},
axis background/.style={fill=white}
]
\addplot[const plot, color=istorange, forget plot] table[row sep=crcr] {%
0	7.23143867120409\\
1	0\\
2	1.851945220797\\
3	0\\
4	0\\
5	0.379438859859105\\
6	0\\
7	0\\
8	0\\
9	0\\
10	0\\
11	0.0851871591948722\\
12	0\\
13	0\\
14	0.0383064757746671\\
15	0\\
16	0\\
17	0\\
18	0\\
19	0\\
20	-0.0101696960947368\\
21	0\\
22	0\\
23	0\\
24	0\\
25	0\\
26	0\\
27	0\\
28	0\\
29	-0.0010858375186098\\
};
\end{axis}
\end{tikzpicture}%
	\caption{The two inputs of the first (green) and second (orange) batch reactor.}
	\label{fig_act_input}
\end{figure}
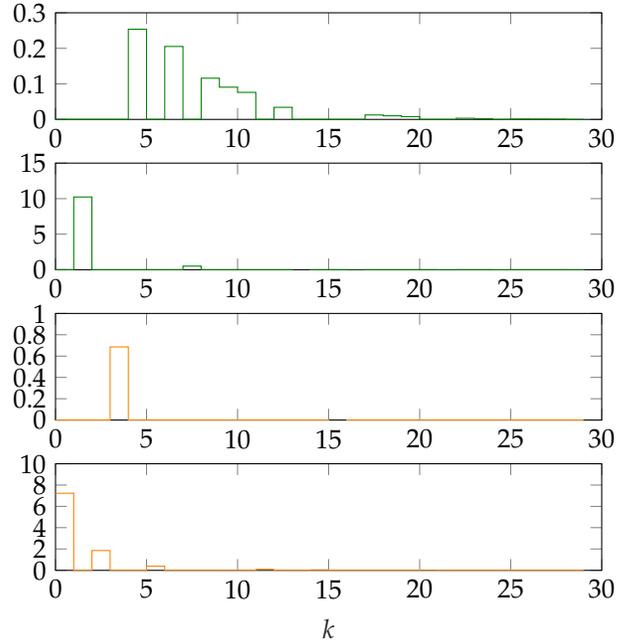

The terminal costs and controllers were computed by solving the LMIs given in Theorem \ref{thm_convergence_act}. Simulation results for this setup with a prediction horizon of $N=3$ can be found in Figures \ref{fig_act_state} and \ref{fig_act_input}. It can be seen that the states of the second batch reactor, which have a higher weight, are kept lower and converge faster than those of the first reactor. 
The first actuator of the first reactor, which features the highest cost, is chosen for transmission many more times than the other inputs. This is due to the fact that the high weight and the set-to-zero strategy make it favorable to send low control values frequently. In contrast, for the lowly weighted second input, there are fewer triggerings which come with a rather high control input.  For the equally weighted inputs of the second reactor, the behavior is reversed: the first actuator of the second reactor is triggered much less frequently than the second one and the input levels are higher, respectively lower than those of the first reactor. In summary, the closed loop trajectories resulting from an application of the MPC scheme reflect the behavior expected from the chosen weightings. Despite the fact that one of the actuators is given most of the overall network access, by virtue of the terminal costs used in the MPC, convergence of the state of both plants under the proposed scheduling and control scheme to zero is guaranteed.

\section{Summary and Outlook}
\label{sec_summary}

In this paper, we proposed the novel approach to use an MPC with time-varying terminal ingredients to co-design the transmission schedule and control inputs in rollout approaches. Compared to the formerly employed multi-step MPC formulation, the intended benefit was to lower computational complexity and provide better robustness. The formulation and computation of the required time-varying terminal ingredients is, however, slightly more involved. For two distinct NCS setups, an explicit choice of the terminal regions and costs was provided to ensure convergence of the setup. Eventually, a numerical example demonstrated that indeed, the computational complexity could be drastically lowered in comparison to multi-step MPC formulations.

Possible future work could extend the presented approaches to construct terminal regions and costs to nonlinear plants, as discussed in Remark \ref{remark_nonlinear}. Further, guaranteed convergence of rollout approaches in the case of incomplete state information would be a highly interesting future topic.

\small
\bibliographystyle{plain}
\bibliography{bib_ACC20}

\end{document}